\newcommand{\ehsan}[1]{  \ifthenelse{\boolean{showcomments}}
{ \textcolor{blue}{(Ehsan says:  #1)}} {}  }
\newcommand{\fari}[1]{\ifthenelse{\boolean{showcomments}}
{ \textcolor{magenta}{(Fariborz says: #1)} } {} }
\newcommand{\babak}[1]{\ifthenelse{\boolean{showcomments}}
{ \textcolor{green}{(Babak says:  #1)}}{}}
\newcommand{\Sc}{{\mathbf{S}}}
\newcommand{\al}{{\alpha}}
\newcommand{\re}{{\text{Re}}}
\newcommand{\im}{{\text{Im}}}
\newcommand{\xin}{\mathbf{ x}_{\text{init}}}
\newcommand{\derec}{\delta_{\text{rec}}}
\newcommand{\psiubw}{\psi(u,w)}
\newcommand{\rhoin}{\rho_{\text{init}}}
\newcommand{\dist}{\mathbf{D}}
\newcommand{\xn}{\mathbf{x}_0}
\newcommand{\xh}{\hat{\mathbf{x}}}
\newcommand{\sig}{\sigma}
\newcommand{\R}{\mathbb{R}}
\newcommand{\C}{\mathbb{C}}
\newcommand{\x}{\mathbf{x}}
\newcommand{\w}{\mathbf{w}}
\newcommand{\h}{\mathbf{h}}
\newcommand{\g}{\mathbf{g}}
\newcommand{\av}{\mathbf{a}}
\newcommand{\ub}{\mathbf{u}}
\newcommand{\dv}{\mathbf{d}}
\newcommand{\G}{\mathbf{G}}
\newcommand{\Nc}{\mathcal{N}}
\newcommand{\Ncc}{{\mathcal{N}_\mathbb{C}}}
\theoremstyle{theorem}
\newtheorem{thm}{Theorem}[section]
\newtheorem{lem}{Lemma}[section]
\newtheorem{cor}{Corollary}[section]
\theoremstyle{remark}
\newtheorem{remark}{Remark}
\theoremstyle{definition}
\newcommand{\beq}{\begin{equation}}
\newcommand{\eeq}{\end{equation}}
\newcommand{\bea}{\begin{align}}
\newcommand{\eea}{\end{align}}
\begin{document}

\title{A Precise Analysis of PhaseMax in Phase Retrieval} 


\author{%
  \IEEEauthorblockN{Fariborz Salehi, Ehsan Abbasi, Babak Hassibi}
  \IEEEauthorblockA{Department of Electrical Engineering, Caltech, Pasadena, CA 91125\\
                    Email: \{fsalehi, eabbasi, hassibi\}@caltech.edu}

}


\maketitle

\begin{abstract}
   Recovering an unknown complex signal from the magnitude of linear combinations of the signal is referred to as phase retrieval. We present an exact performance analysis of a recently proposed convex-optimization-formulation for this problem, known as PhaseMax. Standard convex-relaxation-based methods in phase retrieval resort to the idea of ''lifting'' which makes them computationally inefficient, since the number of unknowns is effectively squared. In contrast, PhaseMax is a novel convex relaxation that does not increase the number of unknowns. Instead it relies on an initial estimate of the true signal which must be externally provided. In this paper, we investigate the required number of measurements for exact recovery of the signal in the large system limit and when the linear measurement matrix is random with iid standard normal entries. If $n$ denotes the dimension of the unknown complex signal and $m$ the number of phaseless measurements, then in the large system limit, $\frac{m}{n} > \frac{4}{\cos^2(\theta)}$ measurements is necessary and sufficient to recover the signal with high probability, where $\theta$ is the angle between the initial estimate and the true signal. Our result indicates a sharp phase transition in the asymptotic regime which matches the empirical result in numerical simulations.
\end{abstract}

\section{Introduction}

The fundamental problem of recovering a signal from magnitude-only measurements is known as \textit{phase retrieval}. This problem has a rich history and occurs in many areas in engineering and applied physics such as  astronomical imaging \cite{Astro2}, X-ray crystallography \cite{Crystal}, medical imaging \cite{Medical2}, and optics \cite{Optic}. In most of these cases, measuring the phase is either expensive or even infeasible. For instance, in some optical settings, detection devices like CCD cameras and photosensitive films cannot measure the phase of a light wave and instead measure the photon flux. 

Reconstructing a signal from magnitude-only measurements is generally very difficult due to loss of important phase information. Therefore, phase retrieval faces fundamental theoretical
and algorithmic challenges and a variety of methods were suggested~\cite{jaganathan2015phase}. Convex methods have recently gained significant attention to solve the phase retrieval problem. These methods are mainly based on semidefinite programming by linearizing the resulting quadratic constraints using the idea of \textit{lifting}~\cite{candes2015phase, K6, LIFTING1,LIFTING2, walk2017blind,CANDESPL, BALAN, Fariborz17, SAMET, ROMBERG, jaganathan2012recovery}. Due to the convex nature of their formulation, these algorithms usually have rigorous theoretical guarantees. However, semidefinite relaxation squares the number of unknowns which makes these algorithms computationally complex, especially in large systems. This caveat makes these approaches intractable in real-world applications.

Introduced in two independent works~\cite{goldstein2016phasemax, bahmani2016phase}, \textit{PhaseMax} is a recently proposed convex formulation for the phase retrieval problem in the original $n-$dimensional parameter space. This method maximizes a linear functional over a  convex feasible set. The constrained set in this optimization is obtained by relaxing the non-convex equality constraints in the original phase retrieval problem to convex inequality constraints. To form the objective function, PhaseMax relies on an initial estimate of the true signal which must be externally provided.

The simple formulation of the PhaseMax method makes it appealing for practical applications. In addition, existing theoretical analysis  indicates  this method achieves perfect recovery for a nearly optimal number of random measurements.  The analysis in~\cite{goldstein2016phasemax, bahmani2016phase, hand2016elementary} suggests that $m>Cn$, where $C$ is a constant that depends on the quality of initial estimate ($\xin$), is the sufficient number of measurements for perfect signal reconstruction when the measurement vectors are drawn independently from the Gaussian distribution. The exact phase transition threshold, i.e. the exact value of the constant $C$, for the \emph{real} PhaseMax has been recently derived in ~\cite{dhifallah2017phase, dhifallah2017fundamental}. However, for the practical case of complex signals, previous results could only provide an upper bound on $C$. 

In this paper, we characterize the phase transition regimes for the perfect signal recovery in the PhaseMax algorithm. Our result is asymptotic and assumes that the measurement vectors are derived independently from Gaussian distribution. To the extent of our knowledge, this is the first work that computes the exact phase transition bound of the (complex-valued) PhaseMax in phase retrieval. 

In our analysis, we utilize the recently developed Convex Gaussian Min-max Theorem (CGMT)~\cite{cgmt} which uses Gaussian process methods. CGMT has been successfully applied in a number of different problems including the performance analysis of structured signal recovery in M-estimators \cite{cgmt,abbasi2016general}, massive MIMO \cite{berreal,massivemimo} and etc. CGMT has been also used by Dhifallah et. al. \cite{dhifallah2017phase} to analyze the real version of the PhaseMax. But unfortunately, the complex case does not directly fit into the framework of CGMT. Therefore, in this paper we introduce a secondary optimization that provably has the same phase transition bounds as PhaseMax and that also can be analyzed by CGMT. 

The organization of the paper is as follows. In section \ref{sec:setup}  we introduce the main notations and mathematically setup the problem. In section \ref{sec:thm}, we present our main result followed by discussions and the result of numerical simulations. Finally, section \ref{sec:proof} includes an outline of the proof of the main theorem. 
\section{Problem Setup}
\label{sec:setup}

\subsection{Notations}

We gather here the basic notations that are used throughout this paper. We reserve the letter $j$ for the complex unit. For a complex scalar $x\in\C$,  $x_\re$ and $x_\im$ correspond to the real and imaginary parts of $x$, respectively, and  $|x|=\sqrt{x_\re^2+x_\im^2}\;$. $\Nc(\mu,\sig^2)$ denotes real Gaussian distribution with mean $\mu$ and variance $\sig^2$. Similarly, $\Ncc(\mu,\sig^2)$ refers to a \textit{complex} Gaussian distribution with real and imaginary parts drawn independently from $\Nc(\mu_\re,\sig^2/2)$ and $\Nc(\mu_\im,\sig^2/2)$, respectively. $\mathcal R(2\sigma^2)$ denotes the Rayleigh distribution with second moment equal to $2\sigma^2$. $X\sim p_X$ implies that the random variable $X$ has a density $p_{X}$. Bold lower letters are reserved for vectors and upper letters are used for matrices. For a vector $\mathbf v$, $v_i$ denotes its $i^{\text{th}}$ entry and $||\mathbf v||$ is its $l_2$ norm. $(\cdot)^{\star}$ is used to denote the conjugate transpose. For a complex vector $\mathbf v$, ${\mathbf v}_{\re}$ and ${\mathbf v}_{\im}$ denotes its real and complex parts, respectively. Also, ${\mathbf v}(k:l)$ is a column vector consisting of entries with index from $k$ to $l$ of $\mathbf v$. We use caligraphy letters for sets. For set $\mathcal S$, $\text{cone}(\mathcal S)$ is the closed conical hull of $\mathcal S$.  

\subsection{Setup}

Let $\x_0\in \C^n$ denote the underlying signal. We consider the phase retrieval problem with the goal of recovering $\x_0$ from $m$ magnitude-only measurements of the form,
\begin{align}\label{eq::measurements}
b_i=|\mathbf \av_i^\star \x_0|,~ i=1,\dots,m.
\end{align} 
Throughout this paper we assume that  $\{\mathbf \av_i \in \C^n\}_{i=1}^m$  is the set of known measurement vectors where the $\av_i$'s are independently drawn from the complex Gaussian distribution with mean zero and covariance matrix $\mathbf I$. 

As mentioned earlier, the PhaseMax method relies on an initial estimate of the true signal.  $\xin\in\C^n$  is used to represent this initial guess. We assume both $\xn$ and $\xin $ are independent of all the measurement vectors. The PhaseMax algorithm provides a convex formulation of the phase retrieval problem by simply relaxing the equality constraints in \eqref{eq::measurements} into \textit{convex} inequality constraints. This results in the following convex optimization problem:

\begin{equation}
\label{eq::phasemax}
\begin{aligned}
&\xh=\arg\max_{\x\in\C^n}&&\re\{\mathbf{\xin}^{\star}\;\mathbf{x}\}\\
&{\text{subject to:}}&&|\mathbf a_i^{\star}\mathbf x|\leq b_i\;,\;\;1\leq i \leq m.
\end{aligned}
\end{equation}
This optimization searches for a feasible vector that posses the most real correlation with $\xin$. Note that because of the global phase ambiguity of the measurements in \eqref{eq::measurements}, we can estimate $\xn$ up to a global phase. Therefore, we define the following performance measure for the PhaseMax method,
\begin{align}\label{eq::performance}
&\dist(\xh,\xn)=\min_{\phi\in [-\pi,\pi]}\frac{\| \xh e^{j\phi}-\xn  \|}{\|\xn\|}\;.
\end{align}
Under this setting, a perfect recovery of $\xn$ means $\dist(\xh,\xn)=0$. In this paper we investigate the necessary and sufficient conditions under which the optimization program \eqref{eq::phasemax} perfectly recovers the true signal.

\section{Main Result}
\label{sec:thm}

In this section, we present the main result of the paper which provides us with the necessary and sufficient number of measurements for the perfect recovery of the PhaseMax method in \eqref{eq::phasemax} under different scenarios. Our result is asymptotic which assumes a fixed oversampling ratio $\delta:=\frac{m}{n}\in[0,\infty)$, while $n\rightarrow \infty$. In theorem \ref{thm::phasetransition}, we introduce $\derec$ which depends on the problem parameters and prove that the condition $\delta>\derec$, is necessary and sufficient for perfect recovery. Our result reveals significant dependence between $\derec$ and the quality of the initial guess. We use the following similarity measure to quantify the caliber of the initial estimate:
\begin{equation}\label{eq::init_measure}
\rho_{\text{init}}:=\max_{0\leq \phi<2\pi} \frac{\re\{e^{j\phi}\; \xin^{\star} \;\x_0\}} {||\x_0|| \;||\xin ||}\;=\;\frac{|\xin^{\star} \;\x_0|}{||\x_0|| \;||\xin ||}.
\end{equation} 
Note that the multiplication by a unit amplitude scalar in the above definition is due to the global phase ambiguity of the phase retrieval solution (the true phase of $\xn$ is dissolved in the absolute value in \eqref{eq::measurements}). Therefore, for convenience we assume both $\xin$ and $\x_0$ are aligned unit norm vectors ($||\x_0||=||\xin||=1$), which results in $\rho_{\text{init}}=\xin^{\star} \;\x_0$. We also define $\theta$ as the angle between $\xin$ and $\x_0$, and therefore, $\rho_{\text{init}}=\cos \theta$. We now present the main result of the paper which characterizes the phase transition regimes of PhaseMax for perfect recovery, in terms of $\delta$ and $\rhoin$.

\begin{thm} 
\label{thm::phasetransition}
Consider the PhaseMax problem defined in section \ref{sec:setup}. For a fixed oversampling ratio $\delta=\frac{m}{n}>4$, the optimization program \eqref{eq::phasemax} perfectly recovers the true signal (in the sense that $\lim_{n\rightarrow \infty} \mathbb P(\dist(\xh,\xn)>\epsilon)=0$, for any fixed $\epsilon>0$) if and only if,

\begin{equation}
\label{eq::bound}
\delta > \derec:=\frac{4}{\cos^2\theta}=\frac{4}{\rho_{\text{init}}^2}\;,
\end{equation}
where $\rho_{\text{init}}$ is defined in \eqref{eq::init_measure}.
\end{thm}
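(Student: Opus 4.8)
The plan is to follow the CGMT route but work around the fact that the complex PhaseMax does not directly fit the framework. The first step is to reformulate \eqref{eq::phasemax} as a feasibility/perturbation problem. Perfect recovery fails precisely when there exists a feasible descent direction at $\x_0$, i.e.\ a direction $\dv$ in $\mathrm{cone}(\{\x : \re\{\xin^\star \x\} \ge \re\{\xin^\star \x_0\}\} - \x_0)$ that keeps all the constraints $|\av_i^\star(\x_0+\dv)| \le b_i = |\av_i^\star\x_0|$ satisfied. Linearizing the magnitude constraint, the binding first-order condition is $\re\{ e^{-j\angle(\av_i^\star\x_0)}\,\av_i^\star\dv\}\le 0$ for all $i$ (up to the usual care about the global phase degree of freedom, which is handled by the distance measure \eqref{eq::performance} and by fixing the phase so that $\xin^\star\x_0$ is real). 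Thus recovery reduces to whether a random conic feasibility problem has only the trivial solution; the oversampling threshold is where this transition happens.

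Next, because CGMT needs a bilinear form $\ub^\star \A \vv$ with $\A$ having iid Gaussian entries, I would split each complex measurement vector into its real and imaginary parts and introduce a \emph{secondary optimization} (as flagged in the introduction) whose optimal value has the same sign/phase-transition behaviour as the PhaseMax feasibility problem but is genuinely an $\mathbb R$-linear min-max over a Gaussian matrix. Concretely, write the worst-case constraint violation as
\begin{equation}
\Phi_n \;=\; \min_{\dv \in \mathcal{C}_n,\ \|\dv\|=1}\ \max_{\substack{\lambda_i \ge 0}} \ \sum_{i=1}^m \lambda_i\,\re\{e^{-j\psi_i}\av_i^\star \dv\},
\end{equation}
where $\mathcal C_n$ encodes the PhaseMax descent cone and $\psi_i = \angle(\av_i^\star\x_0)$. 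After conditioning on $\x_0$ and absorbing phases, the matrix acting between $\dv$ and $\boldsymbol\lambda$ is Gaussian, so CGMT applies: replace the primary problem by the Auxiliary Optimization (AO) in which $\A$ is replaced by the sum of two independent Gaussian vectors $\g$ and $\h$. I would then reduce the AO, via scalarization over the norms of the relevant components (the component of $\dv$ along $\x_0$, the component along $\xin$, and the orthogonal part), to a deterministic two- or three-dimensional min-max problem depending only on $\delta$ and $\rho_{\mathrm{init}}=\cos\theta$.

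The final step is to analyze that scalar min-max: show its optimal value is strictly positive (no nontrivial feasible descent direction, hence recovery) exactly when $\delta > 4/\cos^2\theta$, and strictly negative otherwise, then invoke the two directions of CGMT to transfer this to the original problem — positivity of the AO gives high-probability recovery, and the converse (negativity of the AO on the relevant set) gives high-probability failure. Tracking the $\delta>4$ hypothesis: when $\cos^2\theta<1$ the bound $4/\cos^2\theta>4$ is automatic, and the extra assumption $\delta>4$ just ensures we are in the regime where the scalar optimization is well-posed (the unconstrained complex phase retrieval threshold $\delta=4$ is where even a perfect initializer becomes necessary).

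I expect the main obstacle to be the construction and justification of the secondary optimization: one must show it is convex in the right variable, that its Gaussian matrix genuinely has iid entries after the phase rotations $e^{-j\psi_i}$ are folded in (the $\psi_i$ depend on $\av_i$ and $\x_0$, so this conditioning argument needs care), and crucially that its phase transition coincides with that of the true PhaseMax rather than merely bounding it — this is exactly the gap that prevented earlier works from getting the sharp constant in the complex case. A secondary difficulty is the scalarization of the AO: unlike the real case, the complex geometry forces an extra real degree of freedom (the imaginary part of the correlation of $\dv$ with $\x_0$ and $\xin$), and one must verify that the resulting low-dimensional optimization still collapses cleanly to the clean closed form $4/\rho_{\mathrm{init}}^2$.
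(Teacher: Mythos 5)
Your proposal follows essentially the same route as the paper: reduce perfect recovery to the triviality of the intersection of the descent halfspace with the cone of the feasible set (which is exactly the intersection of the halfspaces $\re\{e^{-j\phi_i}\av_i^\star\w\}\le 0$, not merely a first-order linearization), pass to an equivalent real optimization in $\R^{2n-1}$, apply CGMT, and scalarize the auxiliary problem to a deterministic optimization whose solution yields the threshold $4/\rho_{\text{init}}^2$. The obstacle you flag about folding in the phases $e^{-j\psi_i}$ is resolved in the paper by noting that the aligned vectors $e^{j\phi_i}\av_i$ have an independent Rayleigh-distributed first entry and iid Gaussian remaining entries, so the non-Gaussian column is kept outside the bilinear term and CGMT is applied only to the genuinely Gaussian block.
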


Theorem \ref{thm::phasetransition} establishes a sharp phase transition behavior for the performance of PhaseMax. The inequality \eqref{eq::bound} can also be rewritten in terms of $\theta$ (or $\rho_{\text{init}}$) when the oversampling ratio, $\delta$, is fixed, 

\begin{equation}
\label{eq::boundrho}
\rho_{\text{init}}=\cos\theta>\sqrt{\frac{4}{\delta}}\; .
\end{equation}


The proof of Theorem \ref{thm::phasetransition} consists of two main steps. First, we introduce a real optimization program with $2n-1$ variables and prove that it has the same phase transition bounds as PhaseMax in \eqref{eq::phasemax}. The point of this step is that this new real optimization is especially built in a way that its performance can be precisely analyzed using well known tools like CGMT. Therefore, the next step would be to apply the CGMT framework to the new real optimization and to derive its phase transition bounds.  We postpone a detailed version of the proof to section \ref{sec:proof}.
\begin{remark}\label{rem::rem1}
The condition $\delta>4$ is proven to be fundamentally necessary for the phase retrieval problem under generic measurements to have a unique solution \cite{conca2015algebraic}. This is consistent with Theorem \ref{thm::phasetransition} where you can observe that even in the best scenario where $\xin$ is aligned with $\xn$, we still need $m>4n$ measurements for PhaseMax to have $\xn$ as the solution. On the other hand, in the case where $\xin$ carries no information about $\xn$ ($\xin$ is orthogonal to $\xn$), recovery of $\xn$ by PhaseMax is not guaranteed regardless of the number of measurements. 
\end{remark}

\begin{remark}
It is shown in the work of Goldstein et. al. \cite{goldstein2016phasemax} that $\delta>\frac{4}{1-2\theta/\pi}$ is \textit{sufficient} for perfect recovery of $\xn$. This bound is compared to our result in Fig. \ref{fig1} which shows phase transition regions of PhaseMax derived from empirical results. Although the simulations are run on the signals of size $n=128$, one can see that the blue line that comes from Theorem \ref{thm::phasetransition}, perfectly predicts phase transition boundary.
\end{remark}

\begin{figure}[htp]
\label{fig1}
\begin{center}
\includegraphics[width=8cm]{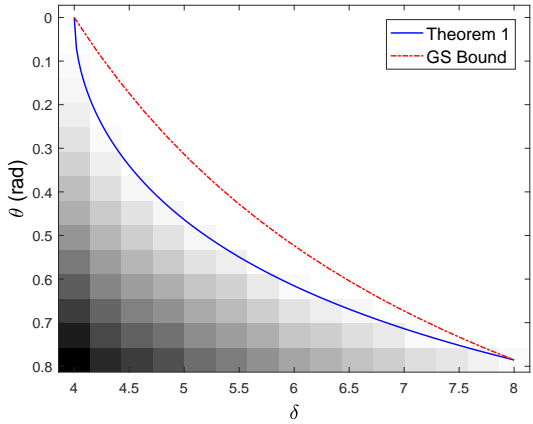}
\end{center}
  \caption{Phase transition regimes for the PhaseMax problem in terms of the oversampling ratio $\delta=m/n$ and $\theta$, the angle between $\xn$ and $\xin$. For the empirical results, we used signals of size $n=128$. The data is averaged over 10 independent realization of the measurement vectors. The blue line indicates the sharp phase transition bounds derived in Theorem \ref{thm::phasetransition} and the red line comes from the results of  \cite{goldstein2016phasemax}, which is referred to as the GS Bound.}
  \label{fig1}
\end{figure}

\section{Proof Outline}\label{sec:proof}

In this part we introduce the main ideas used in the proof of Theorem \ref{thm::phasetransition}. As mentioned earlier in section \ref{sec:thm}, we assume $\x_0$  is a unit norm vector aligned with $\xin$. Due to rotational invariance of the Gaussian distribution, without loss of generality, we assume $\x_0=\mathbf{e_1}$, the first vector of the standard basis in $\mathbb \C^n$. Furthermore, the optimization program \eqref{eq::phasemax} is scalar invariant. So, we can assume $\|\xin\|=1$.

The proof consists of two main steps: In the first step, we analyze the complex optimization problem \eqref{eq::phasemax} and find the necessary and sufficient condition under which $\hat{\mathbf x}=\x_0$. Consequently, we use this condition to build an equivalent real optimization problem. Lemma \ref{lemma_real} introduces this equivalent real optimization ERO, in $\mathbb R^{2n-1}$, and states that the perfect recovery in the PhaseMax algorithm occurs if and only if zero is the unique minimizer of the ERO. 

In the second step, we adopt the CGMT framework to analyze the ERO and investigate the conditions on $\rho_{\text{init}}$ (or $\theta$) under which the unique answer to the ERO is $\mathbf 0$. Therefore ,as a result of Lemma \ref{lemma_real}, these conditions will guarantee the perfect recovery in the initial PhaseMax optimization \eqref{eq::phasemax}.

\subsection{Introducing the Real Optimization ERO }\label{sub:ero}
We define the error vector $\mathbf w:=\mathbf x-\x_0$ and rewrite \eqref{eq::phasemax} in terms of $\mathbf w$,

\begin{equation}
\label{eq::phasemax2}
\begin{aligned}
&\max_{\w\in\C^n}&&\re\{\mathbf{\xin}^{\star}\;\mathbf{w}\}\\
&{\text{subject to:}}&&|\mathbf a_i^{\star}\mathbf (\mathbf e_1+\mathbf w)|\leq b_i\;,\;\;1\leq i \leq m.
\end{aligned}
\end{equation}

For $i=1,2,\ldots,m$ , we use $\phi_i:=\text{phase}(\mathbf{a_i}^{\star}\x_0)$ to define aligned measurement vectors $\tilde{\mathbf a}_i:=e^{j\phi_i}\mathbf a_i$. Therefore, we have,

\begin{equation}
b_i={\tilde{\mathbf{a}}}_i^{\star}\x_0=({\tilde{\mathbf{a}}}_i)_1,\;\;\; \text{for}\;\; i=1,2,\ldots,m\;,
\end{equation}
where $({\tilde{\mathbf{a}}}_i)_1$ is the first entry of ${\tilde{\mathbf{a}}}_i$.
Let $\mathcal D:=\{\mathbf w\in \C^n: \re\{\xin^{\star}\;\mathbf{w}\}\geq 0\}$ be the set of all vectors $\mathbf w$ with nonnegative objective value and $\mathcal F:=\{\mathbf w\in \C^n: |\mathbf a_i^{\star}\mathbf (\mathbf e_1+\mathbf w)|\leq b_i,\;\text{for}\;i=1,2,\dots,m\}$ be the feasible set of the optimization problem \eqref{eq::phasemax2}. The following lemmas prove necessary and sufficient conditions for perfect recovery in PhaseMax, based on these notations.

\begin{lem}
\label{lemma1}
$\x_0$ is the unique optimal solution of \eqref{eq::phasemax} if and only if $\mathcal D\bigcap \mathcal F=\{\mathbf 0\}$.
\end{lem}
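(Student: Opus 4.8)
The plan is to recast the claim through the error-vector formulation \eqref{eq::phasemax2} and then observe that the statement is essentially a bookkeeping fact, once one records that $\x_0$ is always feasible. First I would note that, by definition \eqref{eq::measurements}, $b_i=|\av_i^\star\x_0|$, so $\x_0$ satisfies every constraint of \eqref{eq::phasemax} with equality and is therefore feasible; equivalently $\mathbf 0\in\mathcal F$. Writing $\w=\x-\x_0$, the map $\x\mapsto\w$ is an affine bijection of $\C^n$, and $\re\{\xin^\star\x\}=\re\{\xin^\star\w\}+\re\{\xin^\star\x_0\}$ differs from the objective of \eqref{eq::phasemax2} only by the constant $\re\{\xin^\star\x_0\}$. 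Hence $\x_0$ is the unique optimal solution of \eqref{eq::phasemax} if and only if $\mathbf 0$ is the unique maximizer of $\w\mapsto\re\{\xin^\star\w\}$ over $\mathcal F$, a problem whose value at $\mathbf 0$ equals $0$.

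Next I would prove the two implications, using that $\mathbf 0\in\mathcal D\cap\mathcal F$ always holds (since $\mathbf 0\in\mathcal F$ and $\re\{\xin^\star\mathbf 0\}=0\ge 0$). For the ``if'' direction, assume $\mathcal D\cap\mathcal F=\{\mathbf 0\}$: then every $\w\in\mathcal F$ with $\w\neq\mathbf 0$ lies outside $\mathcal D$, i.e.\ $\re\{\xin^\star\w\}<0$, so its objective value is strictly below the value $0$ attained at $\mathbf 0$; thus $\mathbf 0$ is the unique maximizer over $\mathcal F$ (and in particular the supremum is finite and attained). For the ``only if'' direction, assume $\mathbf 0$ is the unique optimal solution of \eqref{eq::phasemax2}, so $\re\{\xin^\star\w\}\le 0$ for all $\w\in\mathcal F$; if some $\w\in\mathcal D\cap\mathcal F$ had $\w\neq\mathbf 0$, then $\re\{\xin^\star\w\}\ge 0$ combined with $\re\{\xin^\star\w\}\le 0$ gives $\re\{\xin^\star\w\}=0$, making $\w$ a second optimal solution and contradicting uniqueness. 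Hence $\mathcal D\cap\mathcal F=\{\mathbf 0\}$, and combining with the first paragraph yields the lemma.

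I do not expect a genuine obstacle here; the only points deserving care are (i) explicitly recording the feasibility of $\x_0$, so that $\mathbf 0$ is an admissible comparison point with known objective value $0$, and (ii) the possibility that $\mathcal F$ is unbounded — this is handled automatically, since the hypothesis $\mathcal D\cap\mathcal F=\{\mathbf 0\}$ already forces the objective to be nowhere positive on $\mathcal F$, and conversely the phrase ``$\x_0$ is the unique optimal solution'' presupposes that the maximum is attained. Everything else is a direct translation between feasibility/optimality of \eqref{eq::phasemax} and membership in $\mathcal D\cap\mathcal F$.
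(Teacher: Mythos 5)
Your proof is correct and follows essentially the same route as the paper: translate to the error-vector formulation, note that $\mathbf 0$ is feasible with objective value $0$, and compare objective values of other feasible points directly. You are in fact slightly more careful than the paper's own one-line argument (you handle the strict/non-strict inequality distinction explicitly and argue globally without needing the local-to-global convexity remark), but the underlying idea is identical.
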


\begin{proof}
For $\mathbf w \in \mathcal D\bigcap \mathcal F$, $\x_0+\mathbf w$ is a solution of \eqref{eq::phasemax} with an  objective value greater than the value for $\x_0$. Therefore, $\mathcal D\bigcap \mathcal F=\{\mathbf 0\}$ is equivalent to $\x_0$ be a local minimum of \eqref{eq::phasemax} which is also a global minimum due to convexity of \eqref{eq::phasemax}.
\end{proof}

\begin{lem}
\label{lemma2}
$\mathcal D\bigcap \mathcal F=\{\mathbf 0\}$ if and only if $\mathcal D\bigcap \text{cone}(\mathcal F)=\{\mathbf 0\}$.
\end{lem}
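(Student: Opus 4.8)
The plan is to establish the two implications separately, with the non-trivial direction relying on the structure of $\mathcal{D}$ as a half-space through the origin and $\mathcal{F}$ as an intersection of (affinely shifted) convex constraint sets. First I would dispatch the easy direction: since $\mathbf{0}\in\mathcal{F}$ always (because $\x_0$ itself is feasible for \eqref{eq::phasemax}, i.e.\ $b_i = |\av_i^\star \x_0| \le b_i$), and $\mathcal{F}\subseteq\text{cone}(\mathcal{F})$, the condition $\mathcal{D}\cap\text{cone}(\mathcal{F})=\{\mathbf{0}\}$ immediately forces $\mathcal{D}\cap\mathcal{F}=\{\mathbf{0}\}$. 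So the content is entirely in the forward direction.

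For the forward direction, suppose $\mathcal{D}\cap\mathcal{F}=\{\mathbf{0}\}$ and take any nonzero $\w\in\mathcal{D}\cap\text{cone}(\mathcal{F})$; I want to derive a contradiction. By definition of the conical hull, $\w = t\vv$ for some $t>0$ and some $\vv$ in (the closure of) $\mathcal{F}$ — actually I should be a little careful, since $\text{cone}(\mathcal{F})$ is the \emph{closed} conical hull, so $\w$ may only be a limit of such rays; I would handle this by a limiting argument at the end. Modulo that, write $\w = t\vv$ with $\vv\in\mathcal{F}$. The key observation is that $\mathcal{D}$ is a cone (a closed half-space through $\mathbf{0}$): if $\re\{\xin^\star\w\}\ge 0$ then $\re\{\xin^\star(\w/t)\}\ge 0$, so $\vv\in\mathcal{D}$ as well. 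Now I need to move from $\vv\in\mathcal{F}$ to a point of $\mathcal{D}\cap\mathcal{F}$ that is nonzero, which is where the scaling of $\mathcal{F}$ matters: $\mathcal{F}$ is \emph{not} a cone, but it is convex and contains $\mathbf{0}$, so for any $\vv\in\mathcal{F}$ and any $s\in[0,1]$ the point $s\vv$ lies in $\mathcal{F}$ (convex combination of $\vv$ and $\mathbf{0}$). Since $\mathcal{D}$ is also convex and contains $\mathbf{0}$, the same holds: $s\vv\in\mathcal{D}$. Hence the whole segment $\{s\vv : 0\le s\le 1\}$ lies in $\mathcal{D}\cap\mathcal{F}$, and if $\vv\neq\mathbf{0}$ this segment contains nonzero points, contradicting $\mathcal{D}\cap\mathcal{F}=\{\mathbf{0}\}$. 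If $\vv=\mathbf{0}$ then $\w=t\vv=\mathbf{0}$, also a contradiction.

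The one technical point I expect to be the main obstacle is the closedness of $\text{cone}(\mathcal{F})$: a generic $\w\in\text{cone}(\mathcal{F})$ need not be an exact positive multiple of a point of $\mathcal{F}$, only a limit of such. To handle this I would take a sequence $\w_k = t_k \vv_k \to \w$ with $\vv_k\in\mathcal{F}$, $t_k>0$, and run the segment argument on each $\vv_k$: each small multiple $s\vv_k$ with $s$ chosen so that $\|s\vv_k\|$ is some fixed small constant $c>0$ lies in $\mathcal{D}\cap\mathcal{F}$. These points lie on a sphere of radius $c$, so by compactness a subsequence converges to some $\vv_\infty$ with $\|\vv_\infty\|=c\neq 0$; since $\mathcal{D}$ is closed and $\mathcal{F}$ is closed, $\vv_\infty\in\mathcal{D}\cap\mathcal{F}\setminus\{\mathbf{0}\}$, contradicting the hypothesis. (One must check $\w\neq\mathbf{0}$ lets us assume $t_k$ stays bounded away from $0$ and $\vv_k$ away from $\mathbf{0}$, which is straightforward.) Everything else — convexity of $\mathcal{D}$ and $\mathcal{F}$, membership of $\mathbf{0}$, the cone property of $\mathcal{D}$ — is immediate from the definitions given just above the lemma.
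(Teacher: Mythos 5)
Your reverse direction and the exact-multiple part of the forward direction are fine, and they follow the same route as the paper: the paper's one-line proof rests on the identity $\mathcal D\cap\text{cone}(\mathcal F)=\text{cone}(\mathcal D\cap\mathcal F)$, whose content is precisely your observation that $\mathcal D$ is a cone while $\mathcal F$ is convex and contains $\mathbf 0$. You are also right to single out the closure in the definition of the conical hull as the one genuine obstacle (the paper silently ignores it), but your patch does not work. Two things break in the compactness argument. First, to place $s\vv_k$ on the sphere of radius $c$ with $s\in[0,1]$ you need $\|\vv_k\|\geq c$, and nothing forces this: when $\w$ lies in $\text{cone}(\mathcal F)$ only as a limit of rays, the witnesses typically satisfy $\vv_k\rightarrow\mathbf 0$ and $t_k\rightarrow\infty$ (that is exactly how new directions appear in the closure), so your parenthetical claim that $\vv_k$ stays away from $\mathbf 0$ is the part that fails. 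Second, and more fundamentally, $s\vv_k\in\mathcal D$ requires $\vv_k\in\mathcal D$, hence $\w_k=t_k\vv_k\in\mathcal D$; but $\mathcal D$ is a closed half-space, so membership of the limit $\w$ tells you nothing about the approximants $\w_k$.

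This is not merely presentational: with the \emph{closed} conical hull the forward implication is false in degenerate configurations. Take $n=m=1$, $\x_0=\xin=1$, $\av_1=1$. Then $\mathcal F=\{w\in\C:|1+w|\leq 1\}$ is the disk tangent to the imaginary axis at the origin and $\mathcal D=\{w:\re\{w\}\geq 0\}$, so $\mathcal D\cap\mathcal F=\{\mathbf 0\}$, yet $\text{cone}(\mathcal F)=\{w:\re\{w\}\leq 0\}$ and $\mathcal D\cap\text{cone}(\mathcal F)=j\R\neq\{\mathbf 0\}$. (The same example defeats the paper's unproved identity.) The clean repair is to observe that your segment argument proves the statement verbatim for the non-closed conical hull $\{t\vv:t\geq 0,\ \vv\in\mathcal F\}$, which is what is actually needed downstream: the proof of Lemma \ref{lemma3} produces, for any $\dv$ satisfying the strict inequalities $\re\{\tilde{\mathbf a}_i^{\star}\dv\}<0$, an exact small multiple lying in $\mathcal F$, and the discrepancy between the two hulls is confined to the boundary hyperplanes $\{\re\{\tilde{\mathbf a}_i^{\star}\w\}=0\}$, a degeneracy of probability zero under the Gaussian measurement model of Theorem \ref{thm::phasetransition}. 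Without some such genericity (or a switch to the non-closed hull), the limiting step cannot be completed.
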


\begin{proof}
Note that $\mathcal D\subset \C^n$ is a convex cone and $\mathcal F\subset \C^n$ is a convex set. The proof is the consequence of the following equality,
$$\mathcal D\bigcap \text{cone}(\mathcal F)=\text{cone}(\mathcal D\bigcap \mathcal F).$$

\end{proof}

\begin{lem}
\label{lemma3}

$\text{cone}(\mathcal F)=\bigcap_{i=1}^{m}\{\mathbf w\in \C^n: \re\{{{\tilde{\mathbf{a}}}_i}^{\star}\;\mathbf w\}\leq0\}.$

\end{lem}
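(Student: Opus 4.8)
The goal is to compute $\text{cone}(\mathcal F)$ where $\mathcal F=\{\mathbf w\in\C^n:|\mathbf a_i^{\star}(\mathbf e_1+\mathbf w)|\le b_i,\ i=1,\dots,m\}$, and show it equals the intersection of the $m$ real half-spaces $\{\mathbf w:\re\{\tilde{\mathbf a}_i^{\star}\mathbf w\}\le 0\}$. The natural strategy is a two-sided inclusion, using the aligned vectors $\tilde{\mathbf a}_i=e^{j\phi_i}\mathbf a_i$ for which $b_i=\tilde{\mathbf a}_i^{\star}\mathbf e_1=(\tilde{\mathbf a}_i)_1$ is real and nonnegative. The first observation is that $\mathbf 0\in\mathcal F$ (since $|\mathbf a_i^{\star}\mathbf e_1|=b_i$), so $\mathcal F$ is a convex set containing the origin on its boundary, and therefore $\text{cone}(\mathcal F)$ is the set of directions $\mathbf w$ such that $t\mathbf w\in\mathcal F$ for some $t>0$ (together with its closure).

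First I would establish the inclusion $\text{cone}(\mathcal F)\subseteq\bigcap_i\{\re\{\tilde{\mathbf a}_i^{\star}\mathbf w\}\le 0\}$. Take $\mathbf w\in\mathcal F$. For each $i$, the constraint gives $|\tilde{\mathbf a}_i^{\star}(\mathbf e_1+\mathbf w)|\le b_i$, i.e. $|b_i+\tilde{\mathbf a}_i^{\star}\mathbf w|^2\le b_i^2$. Expanding, $b_i^2+2b_i\re\{\tilde{\mathbf a}_i^{\star}\mathbf w\}+|\tilde{\mathbf a}_i^{\star}\mathbf w|^2\le b_i^2$, hence $2b_i\re\{\tilde{\mathbf a}_i^{\star}\mathbf w\}\le -|\tilde{\mathbf a}_i^{\star}\mathbf w|^2\le 0$. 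Since $b_i>0$ almost surely, this forces $\re\{\tilde{\mathbf a}_i^{\star}\mathbf w\}\le 0$. Thus $\mathcal F$ is contained in the stated intersection of half-spaces, and since that intersection is a closed convex cone, it also contains $\text{cone}(\mathcal F)$.

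For the reverse inclusion, suppose $\mathbf w$ satisfies $\re\{\tilde{\mathbf a}_i^{\star}\mathbf w\}\le 0$ for all $i$; I want to show $\mathbf w\in\text{cone}(\mathcal F)$, for which it suffices to find $t>0$ with $t\mathbf w\in\mathcal F$. For such a $t$, the $i$-th constraint reads $|b_i+t\,\tilde{\mathbf a}_i^{\star}\mathbf w|^2\le b_i^2$, i.e. $2b_i t\,\re\{\tilde{\mathbf a}_i^{\star}\mathbf w\}+t^2|\tilde{\mathbf a}_i^{\star}\mathbf w|^2\le 0$, equivalently $t|\tilde{\mathbf a}_i^{\star}\mathbf w|^2\le -2b_i\re\{\tilde{\mathbf a}_i^{\star}\mathbf w\}$. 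When $\tilde{\mathbf a}_i^{\star}\mathbf w=0$ this holds trivially for every $t>0$; otherwise the right-hand side is $\ge 0$, so the inequality holds for all sufficiently small $t>0$, namely $t\le 2b_i|\re\{\tilde{\mathbf a}_i^{\star}\mathbf w\}|/|\tilde{\mathbf a}_i^{\star}\mathbf w|^2$. Taking $t$ to be the minimum of these finitely many positive thresholds gives $t\mathbf w\in\mathcal F$, hence $\mathbf w\in\text{cone}(\mathcal F)$. Finally, since the half-space intersection is closed, it also contains $\overline{\text{cone}(\mathcal F)}$, completing both inclusions.

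The only subtlety, rather than a genuine obstacle, is bookkeeping around the boundary/closure: $\text{cone}(\mathcal F)$ should be read as the closed conical hull (as defined in the notation section), and one must make sure the degenerate directions — those $\mathbf w$ for which some $\tilde{\mathbf a}_i^{\star}\mathbf w=0$ or for which the achievable $t$'s shrink toward $0$ — are still captured, which is exactly why passing to the closure on the cone side is harmless: the right-hand side is already closed. I would also note in passing that $b_i>0$ holds almost surely over the Gaussian measurement vectors, so the division by $b_i$ is legitimate; this is the only place randomness enters, and it is immaterial to the combinatorial geometry of the claim.
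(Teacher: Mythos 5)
Your forward inclusion is fine and essentially matches the paper's (the paper avoids squaring by using $\re\{z\}\le|z|$ directly, which sidesteps the need for $b_i>0$, but that is cosmetic). The problem is in the reverse inclusion. You claim that for any $\mathbf w$ with $\re\{\tilde{\mathbf a}_i^{\star}\mathbf w\}\le 0$ for all $i$ there is some $t>0$ with $t\mathbf w\in\mathcal F$, arguing that the condition $t|\tilde{\mathbf a}_i^{\star}\mathbf w|^2\le -2b_i\re\{\tilde{\mathbf a}_i^{\star}\mathbf w\}$ "holds for all sufficiently small $t>0$" whenever $\tilde{\mathbf a}_i^{\star}\mathbf w\ne 0$. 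This fails precisely when $\re\{\tilde{\mathbf a}_i^{\star}\mathbf w\}=0$ but $\tilde{\mathbf a}_i^{\star}\mathbf w\ne 0$: then the right-hand side is exactly $0$, the left-hand side is strictly positive for every $t>0$, and your threshold $2b_i|\re\{\tilde{\mathbf a}_i^{\star}\mathbf w\}|/|\tilde{\mathbf a}_i^{\star}\mathbf w|^2$ is zero, so the "minimum of finitely many positive thresholds" is not positive. This is not a vacuous corner case: take $\mathbf w=j\,\x_0$, so that $\tilde{\mathbf a}_i^{\star}\mathbf w=jb_i$ is purely imaginary and nonzero for every $i$; then $|\tilde{\mathbf a}_i^{\star}(\mathbf e_1+t\mathbf w)|=b_i\sqrt{1+t^2}>b_i$ for every $t>0$, so no positive multiple of $\mathbf w$ lies in $\mathcal F$, yet $\mathbf w$ belongs to the right-hand side and must be shown to lie in $\text{cone}(\mathcal F)$.

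Your closing remark does not repair this: the closedness of the half-space intersection helps the \emph{forward} inclusion (a closed cone containing $\mathcal F$ contains its closed conical hull), whereas the gap is on the other side — you must show the degenerate $\mathbf w$ lies in the \emph{closure} of the conical hull of $\mathcal F$. The paper's proof does exactly this: it establishes $r\dv\in\mathcal F$ for small $r$ only for directions $\dv$ with the \emph{strict} inequalities $\re\{\tilde{\mathbf a}_i^{\star}\dv\}<0$ for all $i$, and then relies on $\text{cone}(\cdot)$ denoting the closed conical hull, since such strict directions are dense in the full intersection (which has nonempty interior, e.g. $-\mathbf e_1$ satisfies $\re\{\tilde{\mathbf a}_i^{\star}(-\mathbf e_1)\}=-b_i<0$). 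Concretely, in your counterexample one approximates $j\x_0$ by $j\x_0-\epsilon\x_0$, which is strictly feasible in direction and hence in the conical hull, and lets $\epsilon\to 0$. Adding this approximation step would close the gap.
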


\begin{proof}
Let $\dv\in \mathcal F$, 

\begin{equation}
\label{eq::constraints}
|b_i+{\tilde{\mathbf{a}}}_i^{\star}\dv|\leq b_i \;,\;\;\text{for}\;\; i=1,2,\ldots,m.
\end{equation}
Therefore, 
\begin{align}
\re\{{\tilde{\mathbf{a}}}_i^{\star}\dv\}\;&= \re\{b_i+{\tilde{\mathbf{a}}}_i^{\star}\dv\}-b_i\;,\nonumber\\
 &\leq |b_i+{\tilde{\mathbf{a}}}_i^{\star}\dv|-b_i\;,\\
 &\leq 0\;\nonumber.
\end{align}
This shows that $\text{cone}(\mathcal F)\subseteq \bigcap_{i=1}^{m}\{\mathbf w\in \C^n: \re\{{\tilde{\mathbf{a}}}_i^{\star}\mathbf w\}\leq 0\}$.
To show the other direction, choose $\dv\in \C^n$ such that: $\re\{{\tilde{\mathbf{a}}}_i^{\star}\dv\}<0,\;\text{for}\;\;i=1,2,\ldots,m.$ One can show that there exists $R>0$, such that for all $r\leq R$, $r\dv\in \mathcal F$. Therefore, $\dv\in \text{cone}(\mathcal F)$. This concludes the proof.

\end{proof}

We have the following corollary as a result of Lemma~\ref{lemma1}, Lemma~\ref{lemma2}, and  Lemma~\ref{lemma3}. 

\begin{cor}
$\x_0$ is the unique optimal solution of \eqref{eq::phasemax} if and only if,
\begin{equation}
\label{optcond.}
\{\mathbf w: \re\{\xin^{\star}\mathbf w\}\geq 0\;\; \re\{{\tilde{\mathbf{a}}}_i^{\star}\mathbf w\}\leq 0, \;\text{ for}\;\;1\leq i \leq m\}=\{\mathbf 0\}.
\end{equation}
\end{cor}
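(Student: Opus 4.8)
The claim is a direct corollary, so the plan is simply to chain the three preceding lemmas into a single string of equivalences. First I would rewrite the event ``$\x_0$ is the unique optimal solution of \eqref{eq::phasemax}'' using Lemma~\ref{lemma1} as the purely geometric condition $\mathcal{D}\cap\mathcal{F}=\{\mathbf 0\}$, where $\mathcal{D}$ and $\mathcal{F}$ are the objective-positivity cone and the feasible set of \eqref{eq::phasemax2} introduced above. Since $\mathcal{D}$ is a convex cone and $\mathcal{F}$ is convex (and contains the origin, $\x_0$ being feasible), Lemma~\ref{lemma2} then converts this into $\mathcal{D}\cap\text{cone}(\mathcal{F})=\{\mathbf 0\}$.

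The second and final step is to substitute the half-space descriptions of the two sets. Lemma~\ref{lemma3} gives $\text{cone}(\mathcal{F})=\bigcap_{i=1}^m\{\mathbf w\in\C^n:\re\{\tilde{\mathbf a}_i^{\star}\mathbf w\}\le 0\}$, and by definition $\mathcal{D}=\{\mathbf w\in\C^n:\re\{\xin^{\star}\mathbf w\}\ge 0\}$. Intersecting, one obtains
\[
\mathcal{D}\cap\text{cone}(\mathcal{F})=\bigl\{\mathbf w:\ \re\{\xin^{\star}\mathbf w\}\ge 0,\ \re\{\tilde{\mathbf a}_i^{\star}\mathbf w\}\le 0,\ 1\le i\le m\bigr\},
\]
so demanding that this set reduce to $\{\mathbf 0\}$ is exactly \eqref{optcond.}. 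Combining the three ``if and only if'' statements finishes the proof.

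Because all the work has already been done in Lemmas~\ref{lemma1}--\ref{lemma3}, there is no genuine obstacle here; the only thing worth a line of care is to note that $\mathbf 0$ always lies in the right-hand set of \eqref{optcond.} (it is trivially in $\mathcal{D}$ and, being feasible, in $\text{cone}(\mathcal{F})$), so the content of the corollary is precisely that $\mathbf 0$ is the \emph{only} such $\mathbf w$. This half-space characterization of perfect recovery is what the remainder of the proof---the ERO reformulation of Lemma~\ref{lemma_real} and the subsequent CGMT analysis---will exploit.
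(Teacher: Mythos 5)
Your proposal is correct and matches the paper exactly: the paper presents this corollary as an immediate consequence of Lemmas~\ref{lemma1}--\ref{lemma3}, obtained by chaining the three equivalences and substituting the half-space descriptions of $\mathcal{D}$ and $\text{cone}(\mathcal{F})$, which is precisely what you do. Your added remark that $\mathbf 0$ always belongs to the set in \eqref{optcond.} is a harmless and accurate clarification.
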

We are now ready to establish the equivalent real optimization ERO. We will show that the ERO has the exact phase transition bounds as PhaseMax in \eqref{eq::phasemax}.
\begin{equation}
\label{eq::ERO}
\begin{aligned}
&\max_{\w'\in\R^{2n-1}}&&\mathbf\eta^{T}\;\mathbf{\w'}\\
&{\text{subject to:}}&&|\mathbf {a'}_i^{T}\mathbf (\mathbf e_1+\mathbf w')| \leq b_i\;,\;\;1 \leq i \leq m,
\end{aligned}
\end{equation}
where $\mathbf e_1$ is the first vector of the standard basis in $\mathbb R^{2n-1}$, $\mathbf{\eta}$ and $\{\mathbf a'_i\}_{i=1}^m$ are $(2n-1)$ dimensional real vectors defined as,
\begin{equation}
\label{eq:real equivalent}
\mathbf{\eta}:=\begin{bmatrix} 
\re\{\xin\}\\
-\im\{\xin(2:n)\}
\end{bmatrix}\;\text{and}\;\mathbf a'_i:=\begin{bmatrix} 
\re\{{\tilde{\mathbf{a}}}_i\}\\
-\im\{{\tilde{\mathbf{a}}}_i(2:n)\}
\end{bmatrix},\;\forall i.
\end{equation} 
Here $\im\{{\tilde{\mathbf{a}}}_i(2:n)\}$ is the imaginary part of the last $n-1$ entries of $\tilde{\mathbf{a}}_i$. We conclude this step of the proof with the following lemma:

\begin{lem}
\label{lemma_real}
$\x_0$ is the unique optimal solution of the PhaseMax method if and only if $\w'=0$ is the unique optimal solution of \eqref{eq::ERO}.
\end{lem}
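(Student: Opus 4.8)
The plan is to prove Lemma~\ref{lemma_real} by showing that the feasibility condition in the corollary above, namely that the complex cone intersection in \eqref{optcond.} is trivial, is equivalent to the analogous real cone intersection being trivial, and that the latter is precisely the statement that $\w'=\mathbf 0$ is the unique optimizer of the ERO \eqref{eq::ERO}. By the same chain of reasoning (Lemmas~\ref{lemma1}--\ref{lemma3} applied verbatim in the real setting), $\w'=\mathbf 0$ is the unique optimal solution of \eqref{eq::ERO} if and only if
\[
\{\w'\in\R^{2n-1}:\ \eta^{T}\w'\geq 0,\ {\av'_i}^{T}\w'\leq 0,\ 1\leq i\leq m\}=\{\mathbf 0\}.
\]
So the entire content of the lemma reduces to establishing a bijective correspondence between the complex halfspace-cone in \eqref{optcond.} and this real one.

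The key observation driving the correspondence is that for complex vectors $\ub,\vv\in\C^n$ one has $\re\{\ub^{\star}\vv\}=\ub_\re^{T}\vv_\re+\ub_\im^{T}\vv_\im$, i.e. $\re\{\ub^{\star}\vv\}$ equals the standard real inner product of the $2n$-dimensional ``stacked'' real representations of $\ub$ and $\vv$. The apparent mismatch with the ERO's $2n-1$ coordinates is resolved by recalling that $\x_0=\mathbf e_1$ is real, and that all the aligned measurement vectors satisfy $b_i=(\tilde{\av}_i)_1\in\R$; consequently the constraint $|b_i+\tilde{\av}_i^{\star}\dv|\leq b_i$ is insensitive to $\im\{w_1\}$ in the relevant conic limit, and more to the point, the objective $\re\{\xin^{\star}\w\}$ together with the linearized constraints $\re\{\tilde{\av}_i^{\star}\w\}\leq 0$ can be rewritten using only $\w_\re$ and $\im\{\w(2:n)\}$ after the sign flip on the imaginary block. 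Concretely, I would define the linear map $T:\C^n\to\R^{2n-1}$ by $T(\w)=\big(\w_\re;\ -\im\{\w(2:n)\}\big)$ and verify, using the structure of $\eta$ and $\av'_i$ in \eqref{eq:real equivalent} and the fact that $(\xin)_1$ and $(\tilde\av_i)_1$ contribute only through their real parts to $\re\{\xin^{\star}\w\}$ and $\re\{\tilde\av_i^{\star}\w\}$ respectively (since $\x_0=\mathbf e_1$ forces the relevant alignment), that $\eta^{T}T(\w)=\re\{\xin^{\star}\w\}$ and ${\av'_i}^{T}T(\w)=\re\{\tilde\av_i^{\star}\w\}$ for every $\w$ in the region where the first coordinate's imaginary part is pinned. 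I would then argue $T$ restricted to that region is a linear bijection onto $\R^{2n-1}$, so it carries the complex cone in \eqref{optcond.} onto the real cone above, sending $\mathbf 0$ to $\mathbf 0$; triviality of one intersection is therefore equivalent to triviality of the other, and combining with Lemma~\ref{lemma1} (complex side) and its real analogue yields the claim.

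The step I expect to be the main obstacle is the bookkeeping around the first complex coordinate $w_1$. One must justify why it is legitimate to drop $\im\{w_1\}$ — equivalently, why $\eta$ and $\av'_i$ have $2n-1$ rather than $2n$ entries. The resolution is that the objective $\re\{\xin^{\star}\w\}$ and the linearized constraints $\re\{\tilde\av_i^{\star}\w\}\leq 0$ are not actually the right objects; the honest feasible set is $\mathcal F=\{\w:\ |b_i+\tilde\av_i^{\star}\w|\leq b_i\}$, whose conic hull (Lemma~\ref{lemma3}) is $\bigcap_i\{\re\{\tilde\av_i^{\star}\w\}\leq 0\}$. A cleaner route, which I would adopt, is to observe that $\im\{w_1\}$ can be eliminated at the level of the original PhaseMax optimum: since $\x_0=\mathbf e_1$ is the (global-phase-fixed) target and the performance metric \eqref{eq::performance} quotients out the global phase $e^{j\phi}$, one can rotate any candidate so that its first coordinate is real without changing feasibility (because $|\tilde\av_i^{\star}\x|$ is phase-invariant) and without decreasing the objective at the optimum; hence WLOG $\im\{w_1\}=0$ throughout, and the map $T$ becomes a genuine bijection $\{\w\in\C^n:\im\{w_1\}=0\}\to\R^{2n-1}$. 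With that reduction in hand, the inner-product identities are a one-line computation and the equivalence of the two cone-triviality statements — and thus of the two perfect-recovery conditions — is immediate. The remaining details (that $T$ and $T^{-1}$ preserve the conic/halfspace structure, and that $\mathbf 0\mapsto\mathbf 0$) are routine linear algebra.
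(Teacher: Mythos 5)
Your overall route is the same as the paper's (whose ``proof'' is only the two-line sketch following the lemma: define $\w'$ by stacking $\re\{\w\}$ and $\im\{\w(2:n)\}$ and match the optimality conditions), and you correctly isolate the one genuinely delicate point: why the coordinate $\im\{w_1\}$ may be dropped. Your inner-product identity $\re\{\ub^{\star}\vv\}=\ub_\re^{T}\vv_\re+\ub_\im^{T}\vv_\im$, combined with $\im\{(\xin)_1\}=\im\{(\tilde{\av}_i)_1\}=0$ (which follow from the alignment conventions), does give $\eta^{T}T(\w)=\re\{\xin^{\star}\w\}$ and ${\av'_i}^{T}T(\w)=\re\{\tilde{\av}_i^{\star}\w\}$, and your remark that the ERO side is clean (its constraints $|b_i+{\av'_i}^{T}\w'|\leq b_i$ are two-sided linear inequalities, so the conic reduction there has no closure issues) is correct.

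However, your resolution of the $\im\{w_1\}$ issue fails as stated. You claim one can rotate a candidate $\x$ by a global phase so that $x_1$ becomes real ``without decreasing the objective at the optimum.'' Rotating $\x\mapsto e^{j\phi}\x$ preserves feasibility, but the phase that makes $x_1$ real is generally \emph{not} the phase that preserves or increases $\re\{\xin^{\star}\x\}$: the objective-maximizing rotation makes $\xin^{\star}\x$ real and positive, and since $\xin\neq\mathbf e_1$ in general this does not make $x_1$ real. So ``WLOG $\im\{w_1\}=0$'' is not justified this way, and without it $T$ has the nontrivial kernel $\R j\mathbf e_1$, which lies entirely inside the cone in \eqref{optcond.} (every defining functional vanishes on it, precisely because $(\xin)_1$ and $(\tilde{\av}_i)_1$ are real); consequently triviality of the real cone does not by itself give triviality of $\mathcal D\bigcap\mathcal F$. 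The correct patch is local and elementary: if $\w\in\mathcal D\bigcap\mathcal F$ and $T(\w)=\mathbf 0$, then $\w=cj\mathbf e_1$, and for $c\neq 0$ one has $|\av_i^{\star}(\x_0+\w)|=\sqrt{1+c^2}\,b_i>b_i$, so $\w\notin\mathcal F$; hence $T$ is injective on $\mathcal D\bigcap\mathcal F$ and the kernel direction is harmless for the actual (un-relaxed) feasible set. This also explains why the comparison must be made between the real cone and $\mathcal D\bigcap\mathcal F$ directly rather than with the full complex cone of \eqref{optcond.}, which always contains the line $\R j\mathbf e_1$ and is therefore never literally $\{\mathbf 0\}$ --- a point the paper's intermediate corollary glosses over as well. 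Replace the WLOG step with this argument and your proof goes through.
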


The proof of Lemma \ref{lemma_real} is straightforward by defining

\begin{equation}
\w'=
\begin{bmatrix}
\re\{\w\}\\
\im\{\w(2:n)\}
\end{bmatrix}\in \mathbb R^{2n-1}\;,
\end{equation}
and then showing that the optimality conditions for $\w'=0$ in \eqref{eq::ERO} is equivalent to \eqref{optcond.}.
 
 It is worth mentioning that the result of Lemma \ref{lemma_real} is valid for any set of measurement vectors $\{\mathbf{a}_i\}$. In the next part, we use this result to compute the phase transition of PhaseMax when the measurement vectors are drawn independently from the Gaussian distribution.
 
 \subsection{Convex Gaussian Min-Max Theorem}
Our analysis is based on the recently developed Convex Gaussian Min-max Theorem (CGMT) \cite{cgmt}. The CGMT associates with a Primary Optimization (PO) problem an Auxiliary Optimization (AO) problem from which we can investigate various properties of the primary optimization, such as phase transitions. In particular, the (PO) and the (AO) problems are defined respectively as follows:
\begin{subequations}\label{eq:POAO}
\begin{align}
\label{eq:PO_gen}
\Phi(\G)&:= \min_{\w\in\Sc_\w}~\max_{\ub\in\Sc_\ub}~ \ub^T\G\w + \psiubw,\\
\label{eq:AO_gen}
\phi(\g,\h)&:= \min_{\w\in\Sc_\w}~\max_{\ub\in\Sc_\ub}~ \|\w\|\g^T\ub - \|\ub\|\h^T\w + \psiubw,
\end{align}
\end{subequations}
where $\G\in\R^{m\times n}, \g\in\R^m, \h\in\R^n$, $\Sc_\w\subset\R^n,\Sc_\ub\subset\R^m$ and $\psi:\R^n\times\R^m\rightarrow\R$. Denote $\w_\Phi:=\w_\Phi(\G)$ and $\w_\phi:=\w_\phi(\g,\h)$ any optimal minimizers in \eqref{eq:PO_gen} and \eqref{eq:AO_gen}, respectively.  The following lemma is a result of CGMT \cite{cgmt}.
\begin{lem}\label{lem:CGMT}
Consider the two optimizations \eqref{eq:PO_gen} and \eqref{eq:AO_gen}. Let $\Sc_\w,\Sc_\ub$ be convex and compact sets, $\psi$ be continuous and convex-concave on $\Sc_\w\times\Sc_\ub$, and, $\G,\g$ and $\h$ all have entries iid standard normal.
Suppose there exist $\al$ such that in the limit of $n\rightarrow\infty$ it holds in probability that $\|\w_\phi(\g,\h)\| \rightarrow\al$. Then, the same holds for $\w_\Phi(\G)$ and we have $\|\w_\Phi(\G)\| \rightarrow\al$.
\end{lem}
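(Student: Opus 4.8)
The plan is to invoke the standard two-step Convex Gaussian Min-max argument. The first step is a probabilistic comparison between the optimal \emph{values} of the primary optimization \eqref{eq:PO_gen} and the auxiliary optimization \eqref{eq:AO_gen}; the second step ``localizes'' this comparison to thin norm-shells around $\al$ so as to control the \emph{minimizer} $\w_\Phi(\G)$ rather than just the value $\Phi(\G)$.

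For the value comparison, note that for fixed $(\w,\ub)$ the quantity $\ub^T\G\w$ defines a centered Gaussian process on $\Sc_\w\times\Sc_\ub$ with covariance $(\w^T\w')(\ub^T\ub')$. Adding the common auxiliary term $\|\w\|\,\|\ub\|\,g$, $g\sim\Nc(0,1)$, to both $\ub^T\G\w$ and $\|\w\|\g^T\ub-\|\ub\|\h^T\w$ equalizes second moments, and a short Cauchy--Schwarz computation verifies the covariance ordering required by Gordon's Gaussian min-max comparison inequality. Since $\Sc_\w,\Sc_\ub$ are compact and $\psi$ is continuous, Gordon's inequality gives, for every $c\in\R$,
\[
\prob\big(\Phi(\G)<c\big)\ \le\ 2\,\prob\big(\phi(\g,\h)\le c\big).
\]
The reverse one-sided bound uses the remaining hypotheses: because $\Sc_\w,\Sc_\ub$ are convex and $\psi$ is convex--concave, Sion's minimax theorem lets us exchange $\min_\w\max_\ub$ for $\max_\ub\min_\w$ in both \eqref{eq:PO_gen} and \eqref{eq:AO_gen}; Gordon's inequality applied to this $\max_\ub\min_\w$ form then yields, for every $c\in\R$,
\[
\prob\big(\Phi(\G)>c\big)\ \le\ 2\,\prob\big(\phi(\g,\h)\ge c\big).
\]
Jointly these say $\Phi(\G)$ concentrates wherever $\phi(\g,\h)$ does; this is precisely the content of \cite{cgmt}, so I would cite it rather than reprove Gordon's inequality.

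For the localization step, fix $\epsilon>0$, set $\Sc_\w^{\epsilon}:=\{\w\in\Sc_\w:\ \big|\,\|\w\|-\al\,\big|\ge\epsilon\}$ (still compact), and let $\phi_\epsilon(\g,\h)$ and $\Phi_\epsilon(\G)$ be the optimal values of \eqref{eq:AO_gen} and \eqref{eq:PO_gen} with $\Sc_\w$ replaced by $\Sc_\w^{\epsilon}$. The assumption $\|\w_\phi(\g,\h)\|\to\al$ forces the AO minimizer to eventually leave $\Sc_\w^{\epsilon}$; combined with concentration of the AO value this yields deterministic $\bar\phi<\bar\phi_\epsilon$ and some $\zeta>0$ with $\bar\phi+2\zeta<\bar\phi_\epsilon-2\zeta$ such that, with high probability (henceforth w.h.p.), $\phi(\g,\h)\le\bar\phi+\zeta$ and $\phi_\epsilon(\g,\h)\ge\bar\phi_\epsilon-\zeta$. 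Applying the first of the two comparisons above to the \emph{restricted} problem (only compactness is used there) gives $\Phi_\epsilon(\G)\ge\bar\phi_\epsilon-2\zeta$ w.h.p., and the second comparison applied to the \emph{unrestricted} problem gives $\Phi(\G)\le\bar\phi+2\zeta$ w.h.p. Hence $\Phi_\epsilon(\G)>\Phi(\G)$ w.h.p., so the minimizer of \eqref{eq:PO_gen} cannot lie in $\Sc_\w^{\epsilon}$, i.e.\ $\big|\,\|\w_\Phi(\G)\|-\al\,\big|<\epsilon$ w.h.p. Since $\epsilon>0$ is arbitrary, $\|\w_\Phi(\G)\|\to\al$ in probability.

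\textbf{Main obstacle.} The only genuinely delicate ingredient is the strict gap $\bar\phi_\epsilon>\bar\phi$ in the localization step. The hypothesis controls only the \emph{norm} of the AO minimizer, and converting this into a quantitative jump in the AO \emph{value} after excising the shell $\{\,\big|\,\|\w\|-\al\,\big|<\epsilon\,\}$ requires knowing that $\phi(\g,\h)$ itself concentrates at a deterministic limit attained solely by vectors of norm $\al$. In applications this is obtained by reducing the AO to a fixed low-dimensional scalar program and checking that its objective is strictly convex/coercive with a unique minimizer, so that the restricted program has strictly larger value; carrying out that reduction — and, if $\Sc_\ub$ is a priori unbounded, first compactifying it and checking the truncation is harmless — is where the real work lies, everything else being bookkeeping on top of \cite{cgmt}.
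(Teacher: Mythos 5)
Your argument is a faithful reconstruction of exactly the proof the paper relies on: the paper offers no proof of this lemma at all, deferring entirely to \cite{cgmt}, and your two steps (Gordon's comparison inequality for the optimal values, with Sion/convex-concavity giving the reverse-direction bound, followed by the shell-restriction argument $\Phi_\epsilon(\G)>\Phi(\G)$ w.h.p.\ to localize the minimizer) are precisely the argument in that reference. The ``main obstacle'' you flag is real and correctly diagnosed: the formal CGMT statement takes the strict value gap $\bar\phi<\bar\phi_\epsilon$ (concentration of both the unrestricted and the shell-restricted AO values) as an explicit hypothesis rather than deducing it from norm convergence of $\w_\phi$ alone, and in this paper that gap is supplied downstream by the scalarization of the AO in Theorem \ref{thm::perERO}.
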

In the next section, first we will rewrite the ERO in the form of the optimization \eqref{eq:PO_gen}. This enables us to apply Lemma \ref{lem:CGMT} to the ERO and derive an Auxiliary Optimization in the form of \eqref{eq:AO_gen}. This lemma indicates that if $\|\w_\phi(\g,\h)\| \rightarrow0$ for the (AO), then $\|\w_\Phi(\G)\| \rightarrow0$ for the ERO and we have perfect recovery. (AO) can be analyzed using the conventional concentration results in high dimensions.

 \subsection{Computing the Phase Transition for PhaseMax}
 
 In this part we adopt the CGMT framework along with the result of Lemma \ref{lemma_real} to compute the exact phase transition of the PhaseMax algorithm under the Gaussian measurement scheme.
 
We start by calculating the distribution of the entries of $\mathbf a'_i$ that are defined in \eqref{eq:real equivalent}. Recall that $\mathbf a_i$'s are independently drawn from the complex Gaussian distribution with mean zero and covariance matrix $\mathbf I$. Therefore, the distribution of the entries of ${\tilde{\mathbf{a}}}_i$'s that were defined in section \ref{sub:ero}, is as follows:

 \begin{enumerate}
 \item The first entry of ${\tilde{\mathbf{a}}}_i$ is the absolute value of the first entry of the ${{\mathbf{a}}}_i$. Therefore, it has a Rayleigh distribution, i.e.,
 \begin{equation}
 ({\tilde{\mathbf{a}}}_i)_1\sim \mathcal R(1),
 \end{equation}
 \item The remaining entries of ${\tilde{\mathbf{a}}}_i$ remain standard Gaussian random variables,
  \begin{equation}
 ({\tilde{\mathbf{a}}}_i)_k\sim \Ncc(0,1),\;\; \text{for}\;2\leq k\leq n\;,
 \end{equation}
 \item The entries of ${\tilde{\mathbf{a}}}_i$ remain independent.
 \end{enumerate}
This implies that all the entries of $\mathbf a'_i$ are independent, the first entry of $\mathbf a'_i$ has a $\mathcal R(1)$ distribution and the rest of the entries have Gaussian distribution $\Nc(0,\frac{1}{2})$.
We form the measurement matrix $\mathbf{A}\in \mathbb R^{m\times (2n-1)}$ by stacking vectors $\{\mathbf{a_i}^T,\;1\leq i\leq m\}$. Let $\mathbf{A}_1\in \R^m$ be the first column of $\mathbf A$, and $\tilde{\mathbf A}\in \mathbb R^{m\times(2n-1)}$ be the remaining part (i.e., $\mathbf A=[{\mathbf A}_1\;\;\tilde{\mathbf A}]$). $\x_0=\mathbf e_1$ implies that $\mathbf A_1=[b_1,b_2,\ldots, b_m]^T$, where $b_i$'s are defined in \eqref{eq::measurements}. Using the Lagrange multipliers, we can reformulate \eqref{eq::ERO} as the following minmax program,

\begin{align}
\label{eq::ERO2}
\min_{\substack{w_1\in\R\\ \tilde{\w}\in\R^{2n-2}}}\max_{\mathbf{\lambda},\mathbf{\mu} \in \R_{+}^m}&
\big( -\mathbf{\eta}^{T}\w+{(\mathbf {\lambda}-\mathbf {\mu})}^{T}\tilde{\mathbf{A}}\tilde{\w}\nonumber\\
&-{(\mathbf {\lambda}+\mathbf {\mu})}^{T}\mathbf{A}_1+{(\mathbf {\lambda}-\mathbf {\mu})}^{T}\mathbf{A}_1(1+w_1)\big),
\end{align}
where $w_1$ denotes the first entry of $\w$ and $\tilde{\w}$ is the remaining part. Define $\mathbf v:=\mathbf {\lambda}-\mathbf {\mu}\;$. It can be shown that optimal values of \eqref{eq::ERO2} satisfy $\mathbf {\lambda}+\mathbf {\mu}=|\mathbf {\lambda}-\mathbf {\mu}|$. Here, $|\cdot|$ denotes the component-wise absolute value.  Therefore, \eqref{eq::ERO2} can be rewritten as an optimization over $\mathbf v\in \R^{m}$ and $\mathbf w\in \R^{2n-1}$ in the following form:
\begin{equation}
\label{eq::ERO3}
\begin{aligned}
\min_{\substack{w_1\in\R\\ \tilde{\w}\in\R^{2n-2}}}\max_{\mathbf v \in \R^m}& 
-{\mathbf{\eta}}^{T}\w+{\mathbf v}^{T}\tilde{\mathbf{A}}\tilde{\w}+{\mathbf v}^{T}\mathbf{A}_1(1+w_1)-{|\mathbf v|}^{T}\mathbf{A}_1.
\end{aligned}
\end{equation}
Note that $\tilde{\mathbf{A}}$ has i.i.d. standard normal entries. One can check that \eqref{eq::ERO3} satisfies the condition of Lemma \ref{lem:CGMT}. Hence, we can form the (AO) as follows,
\begin{equation}
\label{eq::AOERO}
\begin{aligned}
\min_{\substack{w_1\in\R\\ \tilde{\w}\in\R^{2n-2}}}\max_{\mathbf v \in \R^m}& 
-{\mathbf{\eta}}^{T}\w+{\mathbf v}^{T}\mathbf g||\tilde{\w}||+||\mathbf v||\mathbf h^{T}\tilde{\w}\\
&+{\mathbf v}^{T}\mathbf{A}_1(1+w_1)-{|\mathbf v|}^{T}\mathbf{A}_1,
\end{aligned}
\end{equation}
where $\mathbf g\in \R^m$ and $\mathbf h \in \R^{2n-2}$ with entries drawn independently from standard normal distribution. Analysis of \eqref{eq::AOERO} is similar to~\cite{dhifallah2017phase}. Due to lack of space, we defer technical details to the full version of the paper. 

We conclude the paper with a theorem that characterizes the performance of the ERO. Let $\w^*$ be the optimizer of \eqref{eq::AOERO}. Define $s^{*}:=1+w^*_1$ and $t^{*}:=||\tilde{\w}^*||$. 

\begin{thm}
\label{thm::perERO}

In the asymptotic regime where $m,n\rightarrow \infty$, and  $\delta:=\frac{m}{n}$, $s^*$ and $t^*$ converges to the solution of the following deterministic optimization,
\begin{equation}
\label{eq::performanceAO}
\begin{aligned}
\max_{s\in[-1,1],\;\; t\geq 0}& \rho_{\text{init}}\;s+\sqrt{1-{\rho_{\text{init}}}^2}\sqrt{t^2-\frac{\delta}{2}p(t,s)}\\
\text{subject to: }&p(t,s)\leq \frac{2t^2}{\delta}.
\end{aligned}
\end{equation} 
In the above optimization, $p(t,s)$ is define as,
\begin{align}
p(t,s)=&t^2+(1+s)[1+s-\sqrt{t^2+(1+s)^2}]\nonumber \\
&+(1-s)[1-s-\sqrt{t^2+(1-s)^2}]
\end{align}
\end{thm}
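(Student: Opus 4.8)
The plan is to read Theorem~\ref{thm::perERO} as the concluding CGMT computation: the Auxiliary Optimization \eqref{eq::AOERO} is a random convex--concave saddle-point problem, and I would pin down the deterministic limit of its value and of the normalized optimizers $(s^{*},t^{*})$ in four moves --- (i) scalarize \eqref{eq::AOERO} to a random min--max over a bounded number of scalar variables, (ii) pass to the limit via concentration of measure, (iii) evaluate the resulting Gaussian--Rayleigh expectation and eliminate the auxiliary scalars to land on \eqref{eq::performanceAO}, and (iv) transfer convergence of values to convergence of minimizers.

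\emph{Scalarization.} Split $\w=(w_{1},\tilde{\w})$ and $\boldsymbol{\eta}=(\eta_{1},\tilde{\boldsymbol{\eta}})$, and set $s:=1+w_{1}$, $t:=\|\tilde{\w}\|$; the alignment $\x_0=\mathbf e_1$, $\|\xin\|=1$ gives $\eta_{1}=\rhoin$ and $\|\tilde{\boldsymbol{\eta}}\|^{2}=1-\rhoin^{2}$. In \eqref{eq::AOERO} the only term coupling the coordinates of $\mathbf v$ is $\|\mathbf v\|\,\mathbf h^{T}\tilde{\w}$, so I introduce the scalar $q:=\|\mathbf v\|$; for fixed $q$ the inner maximization over $\mathbf v$ then has value $q\,\mathbf h^{T}\tilde{\w}+q\sqrt{\sum_i\bigl((|c_i|-(\mathbf A_1)_i)_+\bigr)^2}$ (a Cauchy--Schwarz computation exploiting the homogeneity of $v_i c_i-|v_i|(\mathbf A_1)_i$ in $v_i$), where $c_i:=t\,g_i+s\,(\mathbf A_1)_i$. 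The vector $\tilde{\w}$ now enters only through $t$ and through $\mathbf h^{T}\tilde{\w}$ and $\tilde{\boldsymbol{\eta}}^{T}\tilde{\w}$, so minimizing over its direction collapses these to $-t\,\|q\,\mathbf h-\tilde{\boldsymbol{\eta}}\|$; relaxing the sphere $\|\tilde{\w}\|=t$ to a ball leaves the value unchanged, and Sion's theorem then legitimizes carrying out this minimization before the $\max$ over $q$. What survives, after these reductions --- which parallel the treatment of the real PhaseMax in~\cite{dhifallah2017phase} --- is a random min--max over the scalars $s,t,q$ only; a priori norm bounds on the optimizers confine everything to compact sets, which is what makes Lemma~\ref{lem:CGMT} and the minimax exchanges applicable.

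\emph{Concentration and the limiting program.} With the direction optimizations carried out, the surviving random objects are $\|q\,\mathbf h-\tilde{\boldsymbol{\eta}}\|^{2}=q^{2}\|\mathbf h\|^{2}-2q\,\mathbf h^{T}\tilde{\boldsymbol{\eta}}+\|\tilde{\boldsymbol{\eta}}\|^{2}$ and the empirical average $\tfrac1m\sum_i\bigl((|c_i|-(\mathbf A_1)_i)_+\bigr)^2$. Since $\|\mathbf h\|^{2}$ concentrates around its mean, $\mathbf h^{T}\tilde{\boldsymbol{\eta}}$ is of order one, the $(\mathbf A_1)_i\sim\mathcal R(1)$ are i.i.d.\ and the $g_i$ i.i.d.\ Gaussian, a (uniform over the compact scalar sets) law of large numbers replaces these by their limits after the natural rescaling $\bar q:=q\sqrt{2n}$, which is of order one: the first becomes $\bar q^{2}+(1-\rhoin^{2})$ and the second becomes $\E\bigl[(|tg+sA|-A)_+^2\bigr]$, which --- thanks to the Rayleigh density of $A$ --- integrates in closed form to $p(t,s)$, the Gaussian-tail factors collapsing exactly to the $\sqrt{t^{2}+(1\pm s)^{2}}$ terms. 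Boundedness of the $\max$ over $\bar q$ forces $t^{2}\ge\tfrac{\delta}{2}p(t,s)$, which is precisely the feasibility constraint of \eqref{eq::performanceAO} and marks the phase-transition boundary; solving the remaining one-dimensional concave maximization over $\bar q$ in closed form converts the $\tilde{\boldsymbol{\eta}}$-contribution into $\sqrt{1-\rhoin^{2}}\,\sqrt{t^{2}-\tfrac{\delta}{2}p(t,s)}$ and the $\eta_1$-contribution into $\rhoin s$, so that the scalarized AO equals a constant minus the value of \eqref{eq::performanceAO}.

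\emph{Transfer of the minimizer, and the main obstacle.} Since \eqref{eq::AOERO} is convex in $(w_{1},\tilde{\w})$ and concave in $\mathbf v$, and its scalarized objective converges uniformly on compacts, in probability, to the (assumed uniquely maximized) objective of \eqref{eq::performanceAO}, a standard epi-convergence argument --- the same one underlying Lemma~\ref{lem:CGMT} in~\cite{cgmt} --- upgrades convergence of optimal values to convergence of optimizers, so that $s^{*}$ and $t^{*}$ converge in probability to the maximizing pair $(s,t)$ of \eqref{eq::performanceAO}. I expect the technical heart to be the scalarization/concentration step: making the direction optimizations and the attendant minimax exchanges rigorous in the presence of the non-smooth term $|\mathbf v|^{T}\mathbf A_1$ and the non-Gaussian (Rayleigh) first coordinate of the aligned measurement vectors, and then evaluating the Rayleigh--Gaussian expectation $\E[(|tg+sA|-A)_+^2]=p(t,s)$ in closed form; by comparison, deducing Theorem~\ref{thm::phasetransition} from \eqref{eq::performanceAO} --- checking when the maximizer is $(s,t)=(1,0)$ --- is comparatively routine.
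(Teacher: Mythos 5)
Your proposal follows essentially the same route the paper intends: the paper itself gives no proof of Theorem~\ref{thm::perERO} beyond deriving the auxiliary optimization \eqref{eq::AOERO} and deferring the "analysis similar to~\cite{dhifallah2017phase}" to a full version, and your scalarization (optimize out the directions of $\mathbf v$ and $\tilde{\w}$, concentrate the empirical average of $\bigl((|c_i|-(\mathbf A_1)_i)_+\bigr)^2$ over the Rayleigh--Gaussian law, then eliminate $\bar q$ to produce the constraint $p(t,s)\leq 2t^2/\delta$ and the $\sqrt{1-\rhoin^2}\sqrt{t^2-\tfrac{\delta}{2}p(t,s)}$ term) is exactly that standard CGMT computation and reproduces \eqref{eq::performanceAO}. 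The one place to be careful is the normalization of the non-first coordinates, which are $\Nc(0,\tfrac12)$ rather than standard normal (the paper itself glosses over this when forming \eqref{eq::ERO3}); with the stated normalization one gets $\E\bigl[\bigl((|\tfrac{t}{\sqrt2}g+sA|-A)_+\bigr)^2\bigr]=\tfrac12 p(t,s)$ rather than your $\E\bigl[\bigl((|tg+sA|-A)_+\bigr)^2\bigr]=p(t,s)$, so the factors of $2$ need to be tracked to land on $\tfrac{\delta}{2}p(t,s)$ exactly.
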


It can be shown that $\rho_{\text{init}}>\frac{2}{\sqrt{\delta}}$ is the necessary and sufficient condition for $(t^*,s^*)=(0,1)$ to be the unique solution of \eqref{eq::performanceAO} which is equivalent to the perfect recovery in the ERO.

\section{Acknowledgment}
\label{sec:ack}

This work was inspired by the ideas presented in \cite{dhifallah2017phase}. The authors would like to thank Yue M. Lu, Christos Thrampoulidis, and Philipp Walk for helpful discussions.
\vspace{+5pt}

\bibliographystyle{IEEEbib}
\bibliography{compbib}

\end{document}